\newtheorem{theorem}{Theorem}
\newtheorem{proposition}[theorem]{Proposition}
\newcommand{\asp}{{\it asp}\ }
\newenvironment{proof}[1][Proof]{\noindent \textbf{#1.} }{\  \rule{0.5em}{0.5em}}
\begin{document}

\title{Discrete factor analysis}
\author{Rolf Larsson}
\maketitle
\begin{abstract}
In this paper, we present a method for factor analysis of discrete data. This is accomplished by fitting a dependent Poisson model with a factor structure. To be able to analyze ordinal data, we also consider a truncated Poisson distribution. We try to find the model with the lowest AIC by employing a forward selection procedure. The probability to find the correct model is investigated in a simulation study. Moreover, we heuristically derive the corresponding asymptotic probabilities. An empirical study is also included.\vskip.2cm\noindent
{\bf Key words}: Factor analysis, dependent Poisson model, AIC, model selection.
\end{abstract}

\section{Introduction}

The main idea of classical factor analysis (see e.g. J\"oreskog et al, 2016) is to describe a random vector ${\bf X}=(X_1,...,X_n)'$ as a linear combination of unknown factors $\boldsymbol\xi=(\xi_1,...,\xi_k)'$ plus some independent random error $\boldsymbol\delta=(\delta_1,...,\delta_n)'$ say, where $k<n$. Introducing the $n\times k$ matrix of factor loadings $\boldsymbol\Lambda$, we then have the equation
\begin{equation}
{\bf X}=\boldsymbol\Lambda\boldsymbol\xi+\boldsymbol\delta.\label{FA}
\end{equation}
Restricting the $\xi_j$ to be uncorrelated with zero mean and unit variance, and ucorrelated with $\boldsymbol\delta$, the covariance matrix of ${\bf X}$ is 
\begin{equation}
\boldsymbol\Sigma={\boldsymbol\Lambda}{\boldsymbol\Lambda}'+{\boldsymbol\Psi},\label{Sigma}
\end{equation} 
where ${\boldsymbol\Psi}$ is the covariance matrix of  $\boldsymbol\delta$, usually assumed to be diagonal.

The main focus of (explanatory) factor analysis is to find out about the structure of the loading matrix $\boldsymbol\Lambda$. A common way to deal with this is to let ${\bf X}$ be multivariate normal with zero mean and covariance matrix $\boldsymbol\Sigma$ as in (\ref{Sigma}), and estimate the parameters by maximum likelihood, see J\"oreskog (1967). Observe that the $\boldsymbol\Lambda$ matrix is unique only up to rotation, i.e. ${\boldsymbol\Lambda}{\boldsymbol\Lambda}'={\boldsymbol\Lambda}^*{{\boldsymbol\Lambda}^*}'$ for any ${\boldsymbol\Lambda}^*={\boldsymbol\Lambda}{\bf U}$ where ${\bf U}$ is some orthogonal $k\times k$ matrix.

Despite for the many appealing features of maximum likelihood, searching for the `best'  factor analysis model given data involves some more or less `arbitrary' steps such as choosing the number of factors $k$ and a suitable rotation matrix ${\bf U}$.

In applications, it is common that the data $\bf x$ are observed on an ordinal scale. The continuous variable factor analysis model in (\ref{FA}) can still be applied to this situation, see e.g. J\"oreskog and Moustaki (2001). To this end, the observed data are considered as outcomes from an underlying continuous variable (preferrably normal) that may be described by the factor model (\ref{FA}). Here, a certain (integer) value of the data corresponds to an interval on the continuous scale, defined by threshold parameters. As with all the other parameters, the thresholds may be estimated by maximum likelihood. In terms of numerics, this is a quite formidable task. Hence, alternative procedures have been proposed, for example using polychoric correlations, see Olsson (1979), or likelihood approximations, see Katsikatsou et al (2012).

Factor analysis with discrete data is performed by Zhou et al (2012) and Wedel et al (2003). The former proposes a fully Bayesian method where the parameter vector of the observed discrete variates is modelled with a factor structure similar to the classical J\"oreskog model. The latter approach uses a generalized linear regression model, with a link function that is a function of covariates in a factor form. Like the classical method for continuous data, these two approaches are rather involved numerically and contain issues about factor rotation and determination of the number of factors.

In the present paper, we propose a completely different approach to discrete and ordinal data factor analysis. 
The basis of our approach is the dependent Poisson model, described in e.g. Karlis (2003). In particular, let $U$, $X_1$ and $X_2$ be independent Poisson variates. Then, the variates $Y_1=U+X_1$ and $Y_2=U+X_2$  are also Poisson, but they are now linked through the common factor $U$. Of course, this idea may be extended to arbitrary dimensions, and we could also think of a vector of variables $(Y_1,...,Y_N)$ which may be split up into a number of independent sub systems of the type described. This is then a way to construct a discrete factor model. To deal with ordinal data, we consider truncated distributions. To relax the requirement of independent sub systems, we propose a mixed model approach.

In fact, this factor model idea extends to any (combinations of) discrete distributions, but as a start, we only pursue Poisson in the present paper. Observe that there are many other ways to construct dependent systems of discrete random variables, e.g. via copulas, mixing (compound Poisson) and graphical models. However, none of these seems to produce a system with a factor structure. See further Inouye et al (2017) for a recent overview.

As will be seen in the sequel, the construction of factor models in the way outlined here, as well as maximum likelihood estimation of them, is fairly straightforward. The issue that may be problematic and time consuming is how to choose the `best' possible model among the very many possible suggestions for a given dimension $k$. In this paper, by the `best' model we mean the one with the lowest value of the Akaike information criterion (AIC), see Akaike (1974). We propose to resolve this by employing a forward search algorithm. We will study the probability to find the `correct' model (if there is one) by simulations in dimensions five (where we compare to selection among all possible models) and seven, and we also heuristically calculate the corresponding asymptotic probabilities. 

The rest of the paper is as follows. In section 2 we lay out the model and its estimation via maximum likelihood. The selection algorithm is presented and discussed in section 3. Section 4 contains a simulation study. In section 5, we give an empirical example with ordinal data that previously has been analysed by J\"oreskog et al (2016). Section 6 concludes.

\section{Model and estimation}

\subsection{General}

At first, let us repeat the Karlis bivariate model,
\begin{equation}
\left\{\begin{array}{rcl}
Y_1&=&U+X_1,\\
Y_2&=&U+X_2,\end{array}\right.\label{Karlis}
\end{equation}
where $U$, $X_1$ and $X_2$ are independent random variables that may attain non negative integer values. (At this stage, we do not impose the Poisson assumption.) We say that $U$ is the ``common factor'' that ``loads'' on the variables $Y_1$ and $Y_2$.

It is easy to imagine a setup of a number of possibly dependent variables $Y_1,...,Y_N$ which may be ``linked'' by a set of common factors $U_1,...,U_k$ where $k<N$. If these factors are only allowed to load on one variable each, this gives the general model
\begin{equation}
\left\{\begin{array}{rcl}
Y_1&=&X_1,\\
Y_2&=&X_2,\\
&\vdots&\\
Y_{n_0}&=&X_{n_0},\\
Y_{n_0+1}&=&U_1+X_{n_0+1},\\
&\vdots\\
Y_{n_0+n_1}&=&U_1+X_{n_0+n_1},\\
Y_{n_0+n_1+1}&=&U_2+X_{n_0+n_1+1},\\
&\vdots\\
Y_{n_0+n_1+n_2}&=&U_2+X_{n_0+n_1+n_2},\\
&\vdots\\
Y_{n_0+...+n_k+1}&=&U_k+X_{n_0+...+n_{k-1}+1},\\
&\vdots\\
Y_{n_0+...+n_k}&=&U_k+X_{n_0+...+n_k},
\end{array}\right.\label{genmodel}
\end{equation}
where $N=n_0+...+n_k$ and $U_1,...,U_k,X_1,...,X_N$ are all assumed to be independent non negative integer valued random variables. Moreover, we assume that $n_i\geq 2$ for $i=1,2,...,k$. Observe that this setup also allows for a set of independent components $Y_1,...,Y_{n_0}$. In the following, we will refer to this as a model of type $(n_1,n_2,...,n_k,1,...,1)$, where there are $n_0$ ones at the end. The variables may be shuffled around so that $n_1\geq n_2\geq...\geq n_k$. For example, the model of type $(3,2,1,1)$ is given by
\begin{equation}
\left\{\begin{array}{rcl}
Y_1&=&X_1,\\
Y_2&=&X_2,\\
Y_3&=&U_1+X_3,\\
Y_4&=&U_1+X_4,\\
Y_5&=&U_1+X_5,\\
Y_6&=&U_2+X_6,\\
Y_7&=&U_2+X_7.\label{model3211}
\end{array}\right.
\end{equation}

We want to estimate the parameters of (\ref{genmodel}) by maximum likelihood. This is very feasible, since (\ref{genmodel}) consists of $n_0+k$ simultaneously independent systems. Hence, the likelihood is the product of the likelihoods of all these systems, and the maximum likelihood is the product of the corresponding maximum likelihoods, which all may be evaluated separately. For example, in (\ref{model3211}), the likelihood is a product of likelihoods of one three-dimensional system with one common factor, one two-dimensional system with a common factor and two separate one-dimensional variates.

We need to add distributional assumptions on the $U_j$s and $X_j$s. For example (cf Karlis, 2003), we could assume that the $U_j$ are Poisson with parameters $\lambda_j$ and that the $X_j$ are Poisson with parameters $\mu_j$. Then, by the additivity property of the Poisson distribution, the $Y_j$ are also Poisson, but dependent. The degree of dependence, measured by e.g. the correlation, is a function of the parameters. In the simplest example, (\ref{Karlis}) with $U\sim{\rm Po}(\lambda)$ and $X_j\sim{\rm Po}(\mu_j)$ for $j=1,2$, the correlation between $Y_1$ and $Y_2$ is given by $${\rm corr}(Y_1,Y_2)=\frac{\lambda}{\sqrt{(\lambda+\mu_1)(\lambda+\mu_2)}}.$$
Observe that in this way, only positive correlations are allowed for.

In (\ref{Karlis}), if $f(u;\lambda)$ and $g(x;\mu_j)$ are the probability mass functions of $U$ and the $X_j$ respectively, and we have a set of observation pairs $(y_{11},y_{12}),...,(y_{n1},y_{n2})$. Since $Y_1$ and $Y_2$ are conditionally independent given $U$, the likelihood is
\begin{equation}
L(\lambda,\mu_1,\mu_2)=\prod_{i=1}^n \sum_{u=0}^{\min(y_{i1},y_{i2})}f(u;\lambda)g(y_{i1}-u;\mu_1)g(y_{i2}-u;\mu_2).
\label{Karlik}
\end{equation}
Imposing the Poisson assumption, this becomes
\begin{align}
L(\lambda,\mu_1,\mu_2)&=\prod_{i=1}^n \sum_{u=0}^{\min(y_{i1},y_{i2})}\frac{\lambda^ue^{-\lambda}}{u!}\frac{\mu_1^{y_{i1}-u}e^{-\mu_1}}{(y_{i1}-u)!}\frac{\mu_2^{y_{i2}-u}e^{-\mu_2}}{(y_{i2}-u)!}\notag\\
&=e^{-n(\lambda+\mu_1+\mu_2)}\prod_{i=1}^n \sum_{u=0}^{\min(y_{i1},y_{i2})}\frac{\lambda^u}{u!}\frac{\mu_1^{y_{i1}-u}}{(y_{i1}-u)!}\frac{\mu_2^{y_{i2}-u}}{(y_{i2}-u)!}.
\label{Karlikpo}
\end{align}
The right-hand side of (\ref{Karlikpo}) (of rather the log of it) is readily maximized over the parameters with standard numerical iteration methods. In fact, because of proposition 1 below, we only need to maximize over $\lambda$ since it turns out that $\hat\lambda+\hat\mu_k=\bar y_k$ for $k=1,2$ where $\hat\lambda$ and $\hat\mu_k$ are the MLEs of $\lambda$ and the $\mu_k$, respectively. 

For any numerical maximization in this paper, we use the Matlab routine \texttt{fmincon}.

Next, consider a model with one common factor and an arbitrary number of variables, $m$ say, i.e.
\begin{equation}
\left\{\begin{array}{rcl}
Y_1&=&U+X_1,\\
Y_2&=&U+X_2,\\
&\vdots&\\
Y_m&=&U+X_m.\label{modelm}
\end{array}\right.
\end{equation}
Let $f(u;\lambda)$ and $g(x;\mu_j)$ be the probability mass functions of $U$ and the $X_j$ respectively, $j=1,2,...,m$. Then, with $m$-dimensional observations $(y_{i1},...,y_{in})$ for $i=1,2,...,n$, we get the likelihood
\begin{equation}
L(\lambda,\mu_1,...,\mu_m)=\prod_{i=1}^n \sum_{u=0}^{\min(y_{i1},...,y_{im})}f(u;\lambda)g(y_{i1}-u;\mu_1)\cdots g(y_{im}-u;\mu_m),
\label{likgen}
\end{equation}
and, imposing the Poisson assumption,
\begin{align}
&L(\lambda,\mu_1,...,\mu_m)\notag\\
&=e^{-n(\lambda+\mu_1+...+\mu_m)}\prod_{i=1}^n \sum_{u=0}^{\min(y_{i1},...,y_{im})}\frac{\lambda^u}{u!}\frac{\mu_1^{y_{i1}-u}}{(y_{i1}-u)!}\cdots\frac{\mu_m^{y_{im}-u}}{(y_{im}-u)!}.
\label{likgenpo}
\end{align}

Again, to perform numerical maximization of (\ref{likgenpo}) over the parameters, we only need to maximize w.r.t. $\lambda$. This is a simple consequence of the following proposition. (This fact was also pointed out by Karlis, 2003.)

\begin{proposition}
The parameters that maximize (\ref{likgenpo}), $\hat\lambda,\hat\mu_1,...,\hat\mu_m$, satisfy the equalitites
\begin{equation}
\bar y_k=\hat\mu_k+\hat\lambda,\quad k=1,2,...,m,\label{MLeq}
\end{equation}
where $\bar y_k=n^{-1}\sum_{i=1}^n y_{ik}$ for all $k$.
\end{proposition}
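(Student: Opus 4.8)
The plan is to work with the log-likelihood $\ell = -n(\lambda+\mu_1+\dots+\mu_m) + \sum_{i=1}^n \log S_i$, where $S_i = \sum_{u=0}^{m_i}\frac{\lambda^u}{u!}\prod_{j=1}^m\frac{\mu_j^{y_{ij}-u}}{(y_{ij}-u)!}$ is the inner convolution sum appearing in (\ref{likgenpo}) and $m_i=\min(y_{i1},\dots,y_{im})$, and then to impose the stationarity conditions $\partial\ell/\partial\lambda=0$ and $\partial\ell/\partial\mu_k=0$ at the maximizer. The device that makes everything clean is to multiply each stationarity equation by the relevant parameter, exploiting the elementary identity $\theta\,\partial_\theta(\theta^a/a!)=a\,(\theta^a/a!)$. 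This turns the otherwise awkward derivative of the convolution into plain multiplication by the exponent.

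Concretely, I would first record that $\mu_k\,\partial S_i/\partial\mu_k=\sum_u (y_{ik}-u)\,T_{iu}$ and $\lambda\,\partial S_i/\partial\lambda=\sum_u u\,T_{iu}$, where $T_{iu}=\frac{\lambda^u}{u!}\prod_{j}\frac{\mu_j^{y_{ij}-u}}{(y_{ij}-u)!}$ so that $S_i=\sum_u T_{iu}$. Dividing by $S_i$ produces the quantity $e_i:=\sum_u u\,T_{iu}/S_i$, which is exactly the posterior mean of the latent $U$ for observation $i$; in particular $\lambda\,S_i^{-1}\partial S_i/\partial\lambda = e_i$ and $\mu_k\,S_i^{-1}\partial S_i/\partial\mu_k = y_{ik}-e_i$.

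Multiplying the $\lambda$-equation by $\hat\lambda$ and each $\mu_k$-equation by $\hat\mu_k$, then summing over $i$, gives $n\hat\lambda=\sum_i e_i$ and $n\hat\mu_k=\sum_i(y_{ik}-e_i)=n\bar y_k-\sum_i e_i$, with the $e_i$ evaluated at the MLE. Adding these two relations yields $n(\hat\lambda+\hat\mu_k)=n\bar y_k$, which is precisely (\ref{MLeq}). The crux is that the unknown posterior means $e_i$ cancel upon addition, so they never have to be computed.

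The step I expect to demand the most care is the differentiation of the convolution $S_i$, in particular the boundary terms where $y_{ik}-u=0$ or $u=0$. The multiply-by-parameter trick disposes of these automatically, since the factor $(y_{ik}-u)$ (respectively $u$) vanishes exactly where the corresponding monomial is constant in $\mu_k$ (respectively $\lambda$), so no separate boundary bookkeeping is needed. A minor remaining point is to confirm that the maximum is attained at an interior point with $\hat\lambda,\hat\mu_k>0$; should some component sit on the boundary, the argument would instead invoke the one-sided Karush--Kuhn--Tucker conditions, but the stated equalities (\ref{MLeq}) are exactly the content of the interior stationarity equations.
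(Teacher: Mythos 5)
Your proposal is correct, and while it rests on the same elementary engine as the paper's proof --- the identity $\theta\,\frac{d}{d\theta}\theta^{a}=a\,\theta^{a}$ together with using the $\lambda$-stationarity to eliminate the term coming from the latent variable --- the bookkeeping is genuinely different. The paper works with the likelihood $L$ itself rather than $\log L$: it expands the product over observations into a single multiple sum over $(u_1,\dots,u_n)$ (its display (\ref{Likproof})), differentiates with respect to $\mu_1$, and recognizes the leftover sum $\sum_i A_i$ as $\lambda\,\frac{d}{d\lambda}\bigl\{e^{n(\lambda+\mu_1+\dots+\mu_m)}L\bigr\}$; imposing $dL/d\lambda=0$ at $\hat\lambda$ then yields $\frac{\partial L}{\partial\mu_1}=\frac{nL}{\mu_1}(\bar y_1-\hat\lambda-\mu_1)$, which is solved directly for $\hat\mu_1$. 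You instead stay at the per-observation level of the log-likelihood, introduce the posterior means $e_i$ of the latent $U$ (exactly the quantities driving Karlis's EM algorithm), and cancel them by adding the parameter-weighted stationarity equations $n\hat\lambda=\sum_i e_i$ and $n\hat\mu_k=n\bar y_k-\sum_i e_i$. Your route buys three things: it avoids the paper's global expansion and, in particular, its case split on whether all $y_{i1}=\min(y_{i1},\dots,y_{im})$, since your multiply-by-parameter step kills the problematic vanishing-exponent terms automatically, as you observe; it handles all $k=1,\dots,m$ symmetrically in one pass rather than treating $\mu_1$ ``without loss of generality''; and it exposes the statistical meaning of the equations (the EM fixed-point structure), which the paper's manipulation of $\sum_i A_i$ obscures. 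What the paper's version buys in exchange is an explicit factorization of the derivative as a multiple of $(\bar y_1-\hat\lambda-\mu_1)$, exhibiting the stationary value of $\mu_1$ directly. Your closing caveat about boundary maximizers is fair and honestly flagged; note that the paper is no more careful on this point, as its proof equally presumes interior stationarity through $dL/d\lambda=0$ and $\partial L/\partial\mu_1=0$.
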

\begin{proof}
	See the appendix.
\end{proof}

\subsection{Truncated distributions}

Considering the situation with ordinal data, the Poisson assumption does not seem to fit perfectly well because of the finite number of classes. However, it can still be considered to provide an approximation. Alternatively, the truncated Poisson distribution could be tried. This means that we condition the Poisson variable to at most attain a maximum value, $A$ say. The probability mass function of a ${\rm Po}(\lambda)$ variable truncated in such a way is
$$f(y;\lambda)=\frac{\lambda^y/y!}{\sum_{j=0}^A \lambda^j/j!}.$$
The formulae in the previous section may be readily adjusted to cover this case. However, there does not seem to be any counterpart to proposition 1. Thus, numerical maximization of the likelihood must be performed simultaneously over all parameters, not only over $\lambda$.

Comparing to the traditional factor analysis setup with an underlying  multivariate normal distribution, there are several immediate advantages with our approach: Our model is more explicit and does not take the long route over some underlying continuous distribution. Also, it seems that we may not run into identification and/or factor rotation issues.

The drawback is that we will have to search for the best model within a very large set of possible models. This issue will be discussed at some length in section 3. 

\subsection{A mixed model}

Comparing our setup to traditional factor analysis models, a potential obstacle is the restriction that more than one factor cannot load on the same $Y$ variable. In the literature, an ANOVA like extension to the outlined model here that permits this is proposed, see e.g. Karlis (2003) and Loukas and Kemp (1983). 

For the purposes of the present paper, the ANOVA like model seems to be quite complicated. We suggest another type of model, that extends the model of the previous sections in an easy way and leads to a relatively simple likelihood function. For example, consider the $(3,2)$ model:
\begin{equation}
\left\{\begin{array}{rcl}
Y_1&=&U_1+X_1,\\
Y_2&=&U_1+X_2,\\
Y_3&=&U_1+X_3,\\
Y_4&=&U_2+X_4,\\
Y_5&=&U_2+X_5.\label{model32}
\end{array}\right.
\end{equation}

We can think about this as two groups, the first group $(Y_1,Y_2,Y_3)$ sharing the common factor $U_1$ and the second group $(Y_4,Y_5)$ sharing $U_2$. But maybe $Y_1$ should rather belong to the second group? This would give us the alternative model 
\begin{equation}
\left\{\begin{array}{rcl}
Y_1&=&U_2+X_1,\\
Y_2&=&U_1+X_2,\\
Y_3&=&U_1+X_3,\\
Y_4&=&U_2+X_4,\\
Y_5&=&U_2+X_5.\label{model32b}
\end{array}\right.
\end{equation}
Now, a mixed model that allows for both of these possibilities is a model that is described by (\ref{model32}) with probability $\pi$ and by (\ref{model32b}) with probability $1-\pi$. Such a model may be interpreted as having both factors $U_1$ and $U_2$ loading on $Y_1$. Here, in a sense, $\pi$ describes the extent to which the first factor, $U_1$, is relatively more important than $U_2$ as a loading on $Y_1$. Of course, $\pi=1$ gives us the model (\ref{model32}) as special case, and $\pi=0$ gives us (\ref{model32b}).

As all the other parameters, $\pi$ may be estimated by maximum likelihood. With notation as above, the likelihood for the mixed model described here is
\begin{equation}
L(\pi,\lambda_1,\lambda_2,\mu_1,...,\mu_5)=\prod_{i=1}^n \left\{\pi s_{i1}s_{i2}+(1-\pi)s_{i3}s_{i4}\right\},\label{lik32p}
\end{equation}
where
\begin{align*}
s_{i1}&=\sum_{u_1=0}^{\min(y_{i1},y_{i2},y_{i3})}f(u_1;\lambda_1)g(y_{i1}-u_1;\mu_1)g(y_{i2}-u_1;\mu_2)g(y_{i3}-u_1;\mu_3),\\
s_{i2}&=\sum_{u_2=0}^{\min(y_{i4},y_{i5})}f(u_2;\lambda_2)g(y_{i4}-u_2;\mu_4)g(y_{i5}-u_2;\mu_5),\\
s_{i3}&=\sum_{u_1=0}^{\min(y_{i2},y_{i3})}f(u_1;\lambda_1)g(y_{i2}-u_1;\mu_2)g(y_{i3}-u_1;\mu_3),\\
s_{i4}&=\sum_{u_2=0}^{\min(y_{i1},y_{i4},y_{i5})}f(u_2;\lambda_2)g(y_{i1}-u_2;\mu_1)g(y_{i4}-u_2;\mu_4)g(y_{i5}-u_2;\mu_5).
\end{align*}

\section{Model selection}

\subsection{A proposed method}

When choosing between different models, one may for example use information criteria such as AIC or BIC, see e.g. Akaike (1974) and Schwarz (1978), respectively. 
When possible, sequential likelihood ratio tests may be employed as well. 

In the following, we have chosen to stick to AIC. In presence of data sets of moderately large sizes, this seems to be the most common choice for model selection in the literature. We will use the definition
\begin{equation}
{\rm AIC}=-2\log L_{max}+2p,\label{AIC}
\end{equation}
where $L_{max}$ is the maximum likelihood value and $p$ is the number of parameters. The selected model is the one with lowest AIC.

The main obstacle with our method is that there are so many potential models (combinations of factors). For large dimensions (numbers of $Y$ variables) $N$, it is completely unrealistic to try them all, even for the fastest computer. 

Below, we will consider dimensions 5 and 7. In dimension 5, there are 52 possible models: the (1,1,1,1,1) model, $\binom{5}{2}=10$ models of type (2,1,1,1), $\binom{5}{3}=10$ models of type (3,1,1), $\binom{5}{4}=5$ models of type (4,1), $5*\binom{4}{2}/2=15$ models of type (2,2,1), $\binom{5}{3}=10$ models of type (3,2), and the model of type (5), where the same factor loads on all the five variables.

In dimension 7, it can be shown that the number of possible models is 877. In fact, the number of possible models in dimension $N$ is described by the Bell number, cf Flajolet and Sedgewick (2009), p.560-562. The Bell number gives the number of partitions of the set of integers from 1 to $N$. Calling this number $B_N$, it holds that $\log B_N$ behaves like $N\log N$ as $N$ tends to infinity. Hence, $B_N$ increases with more than an exponential rate with $N$. Thus, for large dimensions, it is not practically feasible to consider all possible models.

The way out of this dilemma is to try some sort of model selection algorithm. In this paper, we suggest to start with the independence model (1,1,...,1), and compare it with all possible (2,1,...,1) models. (A total of $\binom{N}{2}$ models.) If the independence model is the best (has the lowest AIC), the algorithm stops. If not, we go on by estimating all (3,1,...,1) models where the pair of variables that had the same factor in the first step is joined by one of the other variables ($N-2$ models) as well as all (2,2,1,...,1) models where we add a new pair of variables that consists of any two that were not in the first pair ($\binom{N-2}{2}$ models). If none of the (3,1,...,1) of (2,2,1,...,1) models tried is better than the previously chosen (2,1,...,1) model, we stop and choose the previous model. If not, we go on to test new models, and so it goes on. 

The principle in all steps is to take the favorite model of the previous step and then merge any two groups (considering the ones to be groups of their own). For example, if the previously selected model was of type (2,2,1,...,1), the new models tried are of types (3,2,1,...,1), (2,2,2,1,...,1) and (4,1,...,1).

For dimension $N=5$, the algorithm is illustrated in figure 1. Note that in this figure, we have simplified the last step of the algorithm (if it reaches that far) to test model (5) together with (3,2) and (4,1).  

\begin{figure}
\begin{center}
\begin{tikzpicture}
\draw (0,6) node {$(1,1,1,1,1)$};
\draw (0,4) node {$(2,1,1,1)$};
\draw [->,thick] (0,5.5) -- (0,4.5);
\draw (-2,2) node {$(3,1,1)$};
\draw (2,2) node {$(2,2,1)$};
\draw [->,thick] (-0.5,3.5) -- (-1.5,2.5);
\draw [->,thick] (0.5,3.5) -- (1.5,2.5);
\draw (-3,0) node {\small$(5)$};
\draw (-2,0) node {\small$(4,1)$};
\draw (-1,0) node {\small$(3,2)$};
\draw (1,0) node {\small$(5)$};
\draw (2,0) node {\small$(4,1)$};
\draw (3,0) node {\small$(3,2)$};
\draw [->,thick] (-2.2,1.5) -- (-2.8,0.5);
\draw [->,thick] (-2,1.5) -- (-2,0.5);
\draw [->,thick] (-1.8,1.5) -- (-1.2,0.5);
\draw [->,thick] (1.8,1.5) -- (1.2,0.5);
\draw [->,thick] (2,1.5) -- (2,0.5);
\draw [->,thick] (2.2,1.5) -- (2.8,0.5);
\end{tikzpicture}
\caption{Model selection algorithm, dimension 5.}
\end{center}
\end{figure}
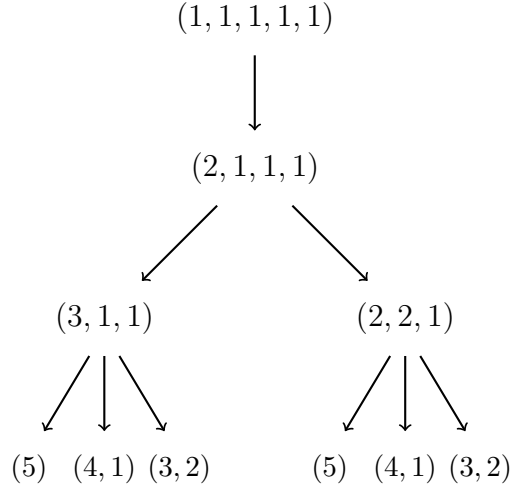

Note that for our algorithm, in dimension five the maximum number of model estimations is 21, out of the 52 possible models. This may not be considered to be a really substantial reduction. However, in dimension seven, the maximum number of model estimations turns out to be 57 out of 877 possible models. 

For an arbitrary dimension $N$, the number of steps in the selection algorithm is of the order $N^3$, see proposition 2. This is in contrast to the exponential rate of increase of the number of possible models as $N$ increases.

\begin{proposition}
In the forward selection algorithm, for dimension $N$ the maximum number of tested models is
\begin{equation}
1+\sum_{k=2}^N\binom{k}{2}=1+\binom{N+1}{3}=\frac{1}{6}(N+2)(N^2-2N+3).\label{numsel}
\end{equation}
\end{proposition}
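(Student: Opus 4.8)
The plan is to split the argument into a combinatorial part (counting the estimations the algorithm performs in its longest possible run) and a purely algebraic part (reducing the resulting sum to the two closed forms on the right-hand side of (\ref{numsel})).

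First I would set up the counting. Think of any model as a partition of $\{1,\dots,N\}$ into blocks, where a block of size $n_i\geq2$ corresponds to a common factor $U_i$ and the singletons are the independent components. The algorithm begins by estimating the independence model $(1,\dots,1)$, which is the partition into $N$ singletons; this accounts for the leading $1$ in (\ref{numsel}). The key structural observation is that if the favorite model carried over from the previous step is a partition with $g$ blocks (counting singletons as blocks of their own), then the models tested in the current step are exactly those obtained by merging two of these $g$ blocks into one. Since such a merge is determined precisely by the choice of the unordered pair of blocks to be joined, and distinct pairs yield distinct partitions, there are exactly $\binom{g}{2}$ candidate models at that step. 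I would verify that this matches the description in the text on a representative case: from a $(2,1,\dots,1)$ model, which has $g=N-1$ blocks, merging the pair with a singleton gives the $N-2$ models of type $(3,1,\dots,1)$ and merging two singletons gives the $\binom{N-2}{2}$ models of type $(2,2,1,\dots,1)$, and indeed $(N-2)+\binom{N-2}{2}=\binom{N-1}{2}$.

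Next I would track how $g$ evolves. Each accepted merge reduces the number of blocks by exactly one, so starting from $g=N$ the number of blocks decreases through $N-1,N-2,\dots$, and the algorithm can continue at most until a single block remains, namely the full $(N)$ model, from which no further merge is possible. Hence the longest possible run performs merging steps with the block count running from $g=N$ down to $g=2$, testing $\binom{g}{2}$ models at the step that starts with $g$ blocks; moreover the models tested at different steps have different block counts and those tested within one step are distinct, so no estimation is repeated. Summing the initial estimation and all these steps gives the maximum number of tested models as $1+\sum_{g=2}^N\binom{g}{2}$, which is the left member of (\ref{numsel}).

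Finally I would finish with algebra. The hockey-stick identity $\sum_{k=2}^N\binom{k}{2}=\binom{N+1}{3}$, provable by telescoping the Pascal relation $\binom{k+1}{3}-\binom{k}{3}=\binom{k}{2}$, gives the middle expression; then expanding $1+\binom{N+1}{3}=1+\tfrac{1}{6}(N+1)N(N-1)=\tfrac{1}{6}(N^{3}-N+6)$ and checking $(N+2)(N^{2}-2N+3)=N^{3}-N+6$ yields the rightmost closed form. The main obstacle is not the algebra but the structural claim that every step offers exactly $\binom{g}{2}$ candidates independently of the block-size profile $(n_1,\dots,n_k,1,\dots,1)$; I would make this rigorous by exhibiting the bijection between the candidate models at a step and the $2$-element subsets of the current set of blocks, and by arguing that the worst case is realized precisely when the algorithm never stops early, i.e.\ climbs all the way to the $(N)$ model.
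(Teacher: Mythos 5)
Your proof is correct and takes essentially the same approach as the paper's: count the initial estimation of $(1,\dots,1)$, observe that a step starting from a model with $g$ blocks tests exactly $\binom{g}{2}$ merged models, let $g$ run from $N$ down to $2$ in the longest run, and finish with the hockey-stick identity (which the paper dismisses as ``simple algebra''). The only cosmetic difference is that the paper, following its Figure~1, folds the last two steps into one (testing $(N)$ together with the merges of the three-block model, giving $\binom{3}{2}+\binom{2}{2}=4$), which yields the same total as your separate final step with $g=2$.
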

\begin{proof}
At first, we estimate the model (1,1,...,1). The second step is to estimate all possible (2,1,...,1) models, the number of which is $\binom{N}{2}$. If one of them is the best so far, we go on estimating all models of the forms (3,1,...,1) or (2,2,1,...,1) that may be constructed by merging any two of the $N-1$ subsets in the (2,1,...,1) model. This number of subsets equals $\binom{N-1}{2}$. If each step in the algorithm results in a better model than previously, the procedure goes on until a model with 3 subsets is tested against all of its submodels, the number of which is $4=3+1=\binom{3}{2}+\binom{2}{2}$.

This shows that the maximum number of estimated models is as in the left hand side of (\ref{numsel}). The equalities of (\ref{numsel}) follow from simple algebra.
\end{proof}

\subsection{Asymptotic properties}

In this section, we heuristically derive the asymptotic probabilities to select the correct model for the outlined selection algorithm.

Take dimension 5 as an example. At first, consider testing the model $(1,1,1,1,1)$ vs a specific $(2,1,1,1)$ alternative. Here, the null model has five parameters, while the alternative model has six, the ``extra'' parameter, $\lambda$ say, being that of the common factor. Seeking to minimize AIC (cf (\ref{AIC})), we reject the null model and choose the alternative one if the difference of their $-2$ log likelihood values is more than 2. 

To calculate the asymptotic probability (\asp in the following) for this to happen, we may employ classical results on the maximum likelihood ratio (MLR) test. Here, observe that we are testing $H_0$: $\lambda=0$ vs $H_1$: $\lambda>0$, so we are testing the null that the parameter lies on the boundary of the parameter space. Under $H_0$, the asymptotic distribution of the MLR test is given by e.g. Self and Liang (1987) as $V=Z^2I(Z>0)$ where $Z$ is standard normal and $I$ is the indicator function. In other words, asymptotically and under $H_0$, in our case the \asp {\it not to reject} is given by
\begin{equation}
\gamma=P(V\leq 2)=P(Z<\sqrt{2})\approx 0.92135.\label{Pacc}
\end{equation}

To get further, we need to employ the following unproved postulates (cf Voung, 1989, for a theory of this type for the standard case when the null value of the parameter is not at the boundary of the parameter space):

\begin{enumerate}
\item Tests performed at the same step in the algorithm are asymptotically independent.
\item Tests of a null model with fewer parameters than the alternative model have asymptotic power 1.
\item Tests of a null model with as many as or fewer parameters than the alternative model have asymptotic probability 1 not to reject.
\end{enumerate}

Now, consider testing the $(1,1,1,1,1)$ vs {\it any} $(2,1,1,1)$ model. There are $\binom{5}{2}=10$ alternative models. By postulate 1, we get that the \asp not to reject is $\gamma^{10}\approx 0.44$. 
Hence, we have heuristically derived the \asp to correctly find the $(1,1,1,1,1)$ model to be approximately 0.44.

Next, consider the case when the $(2,1,1,1)$ model is true. Asymptotically, by postulate 2 the probability to get from the $(1,1,1,1,1)$ model to $(2,1,1,1)$ in the first step tends to one. Moreover, this must be the true one among the 10 possible similar models, because from postulate 3, the \asp to accept the true $(2,1,1,1)$ model over a false $(2,1,1,1)$ model is one. 

Coming this far, we will in fact select the true $(2,1,1,1)$ model if we do not reject it when testing vs the three $(2,2,1)$ models that are accomplished by merging the single items as well as testing vs the three $(3,1,1)$ models that we get by putting any single item together with the pair. Testing $(2,1,1,1)$ vs $(2,2,1)$ gives one extra factor parameter, so by postulate 1, the \asp not to reject in any of these three cases is $\gamma^3\approx 0.78$. Testing vs a $(3,1,1)$ model, however, gives no extra parameter, and so, by postulate 3, the \asp to keep the $(2,1,1,1)$ model is one in this case. To sum up, the \asp to correctly select a $(2,1,1,1)$ model is approximately 0.78.

We may go on in the same fashion to calculate the \asp of correctly selecting any possible model. In particular, one may note that the \asp of correctly selecting any model containing at most one single item is one.

All of this was also done for dimension 7.
The \asp values of correct selection are given together with the corresponding finite sample simulated probabilities in the tables of the next section (the $n=\infty$ columns).

\section{Simulations}

The main question to be asked in this section is: What is the probability that the selection algorithm finds the correct model? We check this with simulations. As a start, we will consider dimension 5, where it is feasible to compare the algorithm to the method of estimating all possible models (there are ``only'' 52 of them here). We then go on to dimension 7, which is also the dimension of the empirical example. In this dimension, we only study the selection algorithm. For this dimension, we also check what happens when the distribution is truncated.

All simulations are performed in Matlab2016a. We maximize the likelihood by minimizing the minus log likelihood using the function {\texttt fmincon}. As starting values for the function, we take 0 for the parameters of the factors and the means of the corresponding $Y_i$ observations for the $X_i$.

Inspired by the empirical example in the next section, we take the common factors $U_i$ to be ${\rm Po}(0.5)$. The $X_i$ that sum with a common factor to give the observed $Y_i$ are also ${\rm Po}(0.5)$. For the $X_i$ that are not (so $X_i=Y_i$ in these cases), we take ${\rm Po}(1)$. This means that all $Y_i$ are ${\rm Po}(1)$.

We simulate models of all possible types and then check the proportion of times that AIC is smallest for the model simulated. We also check the proportion of times that the selection algorithm finds the correct model. This always means not only that it is of the correct type, but also that it places the variables correctly into the different groups that have the same common factor.

The results are given in tables 1 and 2. Comparing to testing all models, it is seen that the selection method works remarkably well. As expected, we also find that the selection probabilities increase with $n$, and that they approach the \asp derived in the previous section. Moreover, as is also natural, for models with relatively many factors they are smaller when the parameter is relatively smaller for the factors compared to the independent components. Cf table 2.

\begin{table}
\caption{Estimated probability to find the correct model, dimension 5, parameters 0.5 for the factors and 1 for observed variables, 5000 replicates.}
\vskip.2cm
\begin{tabular}{cccccccc}
&\multicolumn{2}{c}{$n=25$}&\multicolumn{2}{c}{$n=50$}&\multicolumn{2}{c}{$n=100$}&$n=\infty$\\
\hline
model&{\text test all}&{\text selection}&{\text test all}&{\text selection}&{\text test all}&{\text selection}\\
\hline
(5)&0.97&0.97&1.00&1.00&1.00&1.00&1.00\\
(4,1)&0.93&0.91&0.99&0.99&1.00&1.00&1.00\\
(3,2)&0.79&0.78&0.98&0.98&1.00&1.00&1.00\\
(3,1,1)&0.76&0.76&0.90&0.90&0.92&0.92&0.92\\
(2,2,1)&0.62&0.62&0.90&0.90&0.99&0.99&1.00\\
(2,1,1,1)&0.49&0.51&0.66&0.68&0.74&0.76&0.78\\
(1,1,1,1,1)&0.30&0.39&0.32&0.42&0.33&0.42&0.44
\end{tabular}
\end{table}

\begin{table}
\caption{Estimated probability to find the correct model, dimension 5, parameters 0.5 for the factors and 2 for the observed variables, 5000 replicates.}
\vskip.2cm
\begin{tabular}{cccccccc}
&\multicolumn{2}{c}{$n=25$}&\multicolumn{2}{c}{$n=50$}&\multicolumn{2}{c}{$n=100$}&$n=\infty$\\
\hline
model&{\text test all}&{\text selection}&{\text test all}&{\text selection}&{\text test all}&{\text selection}\\
\hline
(5)&0.44&0.45&0.77&0.77&0.96&0.96&1.00\\
(4,1)&0.39&0.36&0.69&0.67&0.92&0.91&1.00\\
(3,2)&0.19&0.18&0.42&0.40&0.76&0.75&1.00\\
(3,1,1)&0.31&0.28&0.54&0.52&0.76&0.76&0.92\\
(2,2,1)&0.12&0.12&0.27&0.27&0.56&0.56&1.00\\
(2,1,1,1)&0.19&0.21&0.31&0.33&0.48&0.50&0.78\\
(1,1,1,1,1)&0.31&0.38&0.33&0.41&0.34&0.42&0.44
\end{tabular}
\end{table}

For dimension 7, we only consider the selection algorithm and one parameter combination, see table 3. The conclusions are similar to dimension 5.

In tables 4 and 5, we consider truncated distributions. The truncation at 3 of table 4 is the same as in the empirical example, whereas the truncation at 2 in table 5 illustrates what happens when the truncation probability is relatively high. To get the same expected value of the $Y$ variables as in the untruncated case, we have chosen parameter values 1.08 and 1.414, respectively, instead of 1 (and half of these values instead of 0.5). As before, the probabilities to find the correct model increase with $n$. Comparing to the case without truncation, we see that the probabilities are smaller, and even more so in case of the more severe truncation in table 5.

\begin{table}
\caption{Estimated probability to find the correct model, the non truncated case, dimension 7, parameters 0.5 for the factors and 1 for the observed variables, 5000 replicates.}
\vskip.2cm
\begin{tabular}{ccccc}
model&$n=25$&$n=50$&$n=100$&$n=\infty$\\
\hline
(7)&0.97&1.00&1.00&1.00\\
(6,1)&0.94&1.00&1.00&1.00\\
(5,2)&0.81&0.97&0.99&1.00\\
(5,1,1)&0.84&0.92&0.91&0.92\\
(4,3)&0.82&0.97&0.99&1.00\\
(4,2,1)&0.70&0.93&0.99&1.00\\
(4,1,1,1)&0.65&0.75&0.77&0.78\\
(3,3,1)&0.73&0.95&0.99&1.00\\
(3,2,2)&0.64&0.96&1.00&1.00\\
(3,2,1,1)&0.54&0.82&0.91&0.92\\
(3,1,1,1,1)&0.43&0.56&0.59&0.61\\
(2,2,2,1)&0.45&0.83&0.98&1.00\\
(2,2,1,1,1)&0.33&0.60&0.74&0.78\\
(2,1,1,1,1,1)&0.22&0.35&0.41&0.44\\
(1,1,1,1,1,1,1)&0.13&0.16&0.16&0.18
\end{tabular}
\end{table}

\begin{table}
\caption{Estimated probability to find the correct model, dimension 7, parameters 0.54 for the factors and 1.08 for the observed variables, truncated at 3, 1000 replicates.}
\vskip.2cm
\begin{tabular}{ccccc}
model&$n=25$&$n=50$&$n=100$&$n=\infty$\\
\hline
(7)&0.93&1.00&1.00&1.00\\
(6,1)&0.87&0.99&1.00&1.00\\
(5,2)&0.69&0.94&0.99&1.00\\
(5,1,1)&0.74&0.90&0.92&0.92\\
(4,3)&0.69&0.93&0.99&1.00\\
(4,2,1)&0.55&0.88&0.97&1.00\\
(4,1,1,1)&0.55&0.73&0.76&0.78\\
(3,3,1)&0.60&0.91&0.98&1.00\\
(3,2,2)&0.48&0.89&1.00&1.00\\
(3,2,1,1)&0.41&0.75&0.90&0.92\\
(3,1,1,1,1)&0.34&0.54&0.56&0.61\\
(2,2,2,1)&0.31&0.73&0.96&1.00\\
(2,2,1,1,1)&0.24&0.51&0.70&0.78\\
(2,1,1,1,1,1)&0.18&0.32&0.37&0.44\\
(1,1,1,1,1,1,1)&0.14&0.14&0.16&0.18
\end{tabular}
\end{table}

\begin{table}
\caption{Estimated probability to find the correct model, dimension 7, parameters 0.707 for the factors and 1.414 for the observed variables, truncated at 2, 1000 replicates.}
\vskip.2cm
\begin{tabular}{ccccc}
model&$n=25$&$n=50$&$n=100$&$n=\infty$\\
\hline
(7)&0.50&0.87&0.98&1.00\\
(6,1)&0.37&0.79&0.97&1.00\\
(5,2)&0.27&0.66&0.94&1.00\\
(5,1,1)&0.35&0.67&0.88&0.92\\
(4,3)&0.29&0.66&0.94&1.00\\
(4,2,1)&0.20&0.54&0.86&1.00\\
(4,1,1,1)&0.23&0.53&0.72&0.78\\
(3,3,1)&0.23&0.57&0.90&1.00\\
(3,2,2)&0.13&0.47&0.91&1.00\\
(3,2,1,1)&0.17&0.44&0.74&0.92\\
(3,1,1,1,1)&0.18&0.37&0.50&0.61\\
(2,2,2,1)&0.09&0.33&0.75&1.00\\
(2,2,1,1,1)&0.09&0.27&0.52&0.78\\
(2,1,1,1,1,1)&0.11&0.20&0.31&0.44\\
(1,1,1,1,1,1,1)&0.11&0.16&0.17&0.18
\end{tabular}
\end{table}

\section{Empirical example}

In this section, we analyze a seven-dimensional data set taken from J\"oreskog et al (2016). The data come from the Eurobarometer Survey of 1992, where citizens of Great Britain were asked about their attitudes towards Science and Technology. The answers are collected on an ordinal scale with values 1,2,3,4. The sample size is $n=392$. The variables are called \texttt{Comfort, Environment, Work, Future, Technology, Industry} and \texttt{Benefit}, but in the following, we will just refer to them as $\tilde y_1,...,\tilde y_7$. Because the means of all variables are closer to 4 than to 1, we have chosen to transform them according to $y_j=4-\tilde y_j$, to get a better fit to a truncated Poisson distribution. The truncation point is then at 3.

In table 6, we give descriptive statistics: mean, variance and the correlation matrix. Observe that the means are larger than the variances. This is in accord with the truncated Poisson distribution. For example, a Poisson variable with parameter 1.08 has expectation 1.00 and variance 0.84 and a parameter value of 0.836 corresponds to expectation 0.80 and variance 0.56. In view of this, we find that most of the variables have a little smaller variances than expected from the truncated Poisson, but not much smaller.

Looking at correlations, it can be seen that some are negative. This is impossible under the dependent Poisson model. However, all negative correlations are small in absolute value. Hence, in a factor analysis context they should be relatively unimportant anyway.

\begin{table}
\caption{Descriptive statistics for the empirical data set (four minus the original data).}
\vskip.2cm
\begin{tabular}{cccrrrrrrr}
&&&\multicolumn{7}{c}{correlations}\\
&mean&variance&$y_1$&$y_2$&$y_3$&$y_4$&$y_5$&$y_6$&$y_7$\\
\hline
$y_1$&0.88&0.35&1.00\\
$y_2$&1.05&0.85&0.08&1.00\\
$y_3$&1.28&0.65&0.15&-0.07&1.00\\
$y_4$&1.01&0.57&0.28&-0.03&0.40&1.00\\
$y_5$&1.00&0.74&0.07&0.39&-0.09&-0.03&1.00\\
$y_6$&0.76&0.58&0.13&0.33&-0.02&0.06&0.35&1.00\\
$y_7$&1.16&0.64&0.33&-0.03&0.17&0.31&-0.01&0.09&1.00\\
\end{tabular}
\end{table}

Next, we try our model selection method, applied for Poisson variables truncated at 3, to the data $(y_1,...,y_7)$. The model found has the same factor structure as the one given in J\"oreskog et al (2016) when estimated with maximum likelihood. It is a $(4,3)$ model where the variables are grouped as $(y_1,y_3,y_4,y_7)$ and $(y_2,y_5,y_6)$. The estimates are found in the first column of table 7. (The standard errors are obtained from the empirical Fisher information, which in turn is calculated as numerical second derivatives of the observed minus log likelihood w.r.t. the parameters. The standard errors are the Fisher informations to the power of $-1/2$.)

We find that the estimates reflect the means of table 6 fairly well. (Recall that the expected value of a truncated Poisson is greater than the parameter.) 


\begin{table}
\caption{Estimated parameters and log likelihood, empirical data set (standard errors in parenthesis).}
\vskip.2cm
\begin{tabular}{ccc}
&model (4,3)&mixed model\\
\hline
$\hat\pi$&-&0.74\ (0.06)\\
$\hat\lambda_1$&0.67\ (0.06)&0.75\ (0.06)\\
$\hat\lambda_2$&0.48\ (0.04)&0.48\ (0.04)\\
$\hat\mu_1$&0.40\ (0.04)&0.37\ (0.04)\\
$\hat\mu_3$&0.89\ (0.06)&0.81\ (0.06)\\
$\hat\mu_4$&0.55\ (0.04)&0.48\ (0.04)\\
$\hat\mu_7$&0.74\ (0.05)&0.66\ (0.05)\\
$\hat\mu_2$&0.70\ (0.05)&0.70\ (0.05)\\
$\hat\mu_5$&0.64\ (0.05)&0.64\ (0.05)\\
$\hat\mu_6$&0.36\ (0.03)&0.36\ (0.03)\\
$\log L$&$-3064.6$&$-3052.3$
\end{tabular}
\end{table}

In J\"oreskog et al (2016) a second model is fitted (using polychoric correlations and weighted least squares).  In this model, $y_1$ is allowed to belong to both variable groups. To see if we can obtain something similar, we fit a mixed model to the data, where $y_1$ belongs to the first group with probability $\pi$. We give the corresponding estimates in the second column of table 7. 

We find that for the mixed model, the log likelihood is more than 2 units higher than the log likelihood for the (4,3) model. Hence, AIC is lower for the mixed model. The interpretation of $\hat\pi=0.74$ is that $y_1$ is more strongly connected to the $(y_3,y_4,y_7)$ group than to $(y_2,y_5,y_6)$. The latter finding is in accord with the estimates of J\"oreskog et al (2016), where $y_1$ loads two to three times stronger on the first group than on the second.

Moreover, observe that $\hat\lambda_1+\hat\mu_j$ for $j=3,4,7$ is about the same for both models and, in fact, they are equal up to two decimal points for $\hat\lambda_2+\hat\mu_j$ for $j=2,5,6$. 


\section{Concluding remarks}

In this paper, we have proposed a method for performing factor analysis on discrete data. In principle, the method should work for any choice of discrete distribution. As a first try, we have chosen the Poisson distribution. Among the very many candidate models, we look for the one with smallest AIC in a forward search algorithm. We have found, both by heuristic calculations and simulations, that this method works well in the sense that it has a high probability to find the correct model (if there is one) for moderately large to large sample sizes.

Since most real life examples of discrete data factor analysis concern ordinal data, we modify our method to deal with this by looking at truncated discrete distributions. So far, ordinal data factor analysis has been performed as in J\"oreskog et al (2016),
who assume an underlying normal distribution. Numerically, the J\"oreskog methodology can be rather complicated, at least when employing maximum likelihood, because in addition to the parameters of main interest, the threshold parameters need to be estimated. Also, as is always the case with traditional factor analysis, factor rotations of more or less arbitrary nature are imposed. 

The method proposed in the present paper is more straightforward. The model is fully specified once the factor structure has been found. The difficulty lies in finding this structure among the very many possible ones. To this end, we have outlined a forward selection method which seems to work well for small and moderately large dimensions. However, model selection for very large dimensions seems to be a challenge that calls for further development of the selection method. This issue is left for future research.

Another aspect that needs further investigation is the choice of distribution. One could of course replace the Poisson by something else like the geometric, the binomial or the negative binomial distribution. Different mixture distributions (different distributions on factors and independent components) are also possible, for example the mixture of the Binomial and Poisson distributions as discussed in Karlis (2003) among others. 

Also, other information criteria than AIC could be used, e.g. BIC. Moreover, in many applications it would be helpful to avoid the requirement that correlations can not be negative, see e.g. Famoye (2015) and Berkhout and Plug (2004). On the theoretical side, a full proof that the heuristic calculations on asymptotic probabilities to find the correct model are valid is called for.

\section*{References}

Akaike, H. (1974), A new look at the statistical model identification. \emph{IEEE Transactions on Automatic Control} 19, 716-723.
\vskip.2cm\noindent
Berkhout, B., Plug, E. (2004), A bivariate Poisson count data model using conditional probabilities. \emph{Statistica Neerlandica} 3, 349-364.
\vskip.2cm\noindent
Famoye, F. (2015), A multivariate generalized Poisson regression model. \emph{Communications in Statistics - Theory and Methods} 44, 497-511.
\vskip.2cm\noindent
Flajolet, P., Sedgewick, R. (2009), \emph{Analytic Combinatorics}, Cambridge Univerisity Press: Cambridge.
\vskip.2cm\noindent
Inouye, D.I., Yang, E., Allen, G.I., Ravikumar, P. (2017), A review of multivariate distributions for count data derived from the Poisson distribution. \emph{WIREs Comput. Stat. 2017, 9:e1398}, doi: 10.1002/wics.1398.
\vskip.2cm\noindent
J\"oreskog, K.G. (1967), Some contributions to maximum likelihood factor analysis. \emph{Psychometrica}, 32, 443-482.
\vskip.2cm\noindent
J\"oreskog, K.G., Moustaki, I. (2001), Factor analysis of ordinal variables: A comparison of three approaches. \emph{Multivariate Behavioral Research}, 36, 347-387.
\vskip.2cm\noindent
J\"oreskog, K.G., Olsson, U.H., Wallentin, F.Y. (2016), \emph{Multivariate Analysis with LISREL}, Springer.
\vskip.2cm\noindent
Katsikatsou, M., Moustaki, I., Yang-Wallentin, F., J\"oreskog, K. G. (2012), Pairwise likelihood estimation for factor analysis models with ordinal data. \emph{Computational Statistics and Data Analysis} 56, 4243-4258.
\vskip.2cm\noindent
Karlis, D (2003), An EM algorithm for multivariate Poisson distribution and related models. \emph{Journal of Applied Statistics} 30, 63-77.
\vskip.2cm\noindent
Loukas, S., Kemp, C.D. (1983), On computer sampling from trivariate and multivariate discrete distributions. \emph{Journal of Statistical Computation and Simulation} 17, 113-123.
\vskip.2cm\noindent
Olsson, U. (1979), Maximum likelihood estimation of the polychoric correlation coefficient. \emph{Psychometrica} 44, 443-460.
\vskip.2cm\noindent
Schwarz, G. (1978), Estimating the dimension of a model. \emph{The Annals of Statistics} 6, 461-464.
\vskip.2cm\noindent
Self, S.G., Liang, K.Y. (1987), Asymptotic properties of maximum likelihood estimators and likelihood ratio tests under nonstandard conditions. \emph{Journal of the American Statistical Association} 82, 605-610.
\vskip.2cm\noindent
Voung, Q.H. (1989), Likelihood ratio tests for model selection and non-nested hypotheses. \emph{Econometrica} 57, 307-333.
\vskip.2cm\noindent
Wedel, M., B\"ockenholt, U., Kamakura, W.A. (2003), Factor models for multivariate count data. \emph{Journal of Multivariate Analysis} 87, 356-369.
\vskip.2cm\noindent
Zhou, M., Hannah, L.A., Dunson, D.B., Carin, L. (2012), Beta-negative binomial process and poisson factor analysis. In \emph{Proceedings of the 15th International Conference on Artificial Intelligence and Statistics}.

\section*{Appendix: Proof of proposition 1}

Suppose that it is not the case that all $y_{i1}=\min(y_{i1},...,y_{im})$. 
Rewrite (\ref{likgenpo}) as
\begin{align}
&L(\lambda,\mu_1,...,\mu_m)\notag\\
&=e^{-n(\lambda+\mu_1+...+\mu_m)}\sum_{u_1=0}^{z_1}...\sum_{u_n=0}^{z_n}\frac{\lambda^{\sum_{j=1}^n u_j}}{\prod_{i=1}^n u_j!}g_1(\mu_1)\cdots g_m(\mu_m),\label{Likproof}
\end{align}
where $z_i=\min(y_{i1},...,y_{im})$ for all $i$ and
$$g_k(\mu_k)=\frac{\mu_k^{n\bar y_k-\sum_{i=1}^n u_i}}{\prod_{i=1}^n (y_{ki}-u_i)!},\quad k=1,2,...,m.$$
Without loss of generality, pick $k=1$. Now, suppressing the arguments of $L$, differentiation w.r.t $\mu_1$ in (\ref{Likproof}) yields
\begin{align}
\frac{\partial L}{\partial\mu_1}&=-nL\notag\\&+e^{-n(\lambda+\mu_1+...+\mu_m)}\sum_{u_1=0}^{z_1}...\sum_{u_n=0}^{z_n}\frac{\lambda^{\sum_{j=1}^n u_j}}{\prod_{i=1}^n u_j!}\left\{\frac{\partial}{\partial\mu_1}g_1(\mu_1)\right\}\notag\\&\cdot g_2(\mu_1)\cdots g_m(\mu_m),
\label{Likder}
\end{align}
where 
$$\frac{\partial}{\partial\mu_1}g_1(\mu_1)=\frac{n\bar y_1-\sum_{i=1}^n u_i}{\mu_1}g_1(\mu_1).$$
Hence, inserting into (\ref{Likder}) and in view of (\ref{Likproof}),
\begin{equation}
\frac{\partial L}{\partial\mu_1}=-nL+\frac{n\bar y_1}{\mu_1}L-\frac{1}{\mu_1}e^{-n(\lambda+\mu_1+...+\mu_m)}\sum_{i=1}^n A_i,\label{Likder1}
\end{equation}
where e.g.
\begin{equation}
A_n=\sum_{u_1=0}^{z_1}...\sum_{u_{n-1}=0}^{z_{n-1}}\frac{\lambda^{\sum_{j=1}^{n-1} u_j}}{\prod_{i=1}^{n-1} u_j!}\sum_{u_n=0}^{z_n}u_n\frac{\lambda^{u_n}}{u_n!}g_1(\mu_1)\cdots g_m(\mu_m).\label{An}
\end{equation}
But since
$$u_n\lambda^{u_n}=\lambda\frac{d}{d\lambda}\lambda^{u_n},$$
it follows from (\ref{An}) and analogous equations for the other $A_i$ that
$$
\sum_{i=1}^n A_i=\lambda\frac{d}{d\lambda}\sum_{u_1=0}^{z_1}...\sum_{u_n=0}^{z_n}\frac{\lambda^{\sum_{j=1}^n u_j}}{\prod_{i=1}^n u_j!}g_1(\mu_1)\cdots g_m(\mu_m).
$$
In view of (\ref{Likproof}), this is
$$
\sum_{i=1}^n A_i=\lambda\frac{d}{d\lambda}\left\{e^{n(\lambda+\mu_1+...+\mu_m)}L\right\}
=\lambda\left\{ne^{n(\lambda+\mu_1+...+\mu_m)}L+e^{n(\lambda+\mu_1+...+\mu_m)}\frac{dL}{d\lambda}\right\},
$$
and (\ref{Likder1}) yields
$$\frac{\partial L}{\partial\mu_1}=-nL+\frac{n\bar y_1}{\mu_1}L-\frac{\lambda}{\mu_1}\left(nL+\frac{dL}{d\lambda}\right).$$
But since $dL/d\lambda=0$ at $\lambda=\hat\lambda$, we get for this $\lambda$ that
$$\frac{\partial L}{\partial\mu_1}=nL\left(-1+\frac{\bar y_1}{\mu_1}-\frac{\hat\lambda}{\mu_1}\right)=\frac{nL}{\mu_1}(-\mu_1+\bar y_1-\hat\lambda),$$
which is zero for $\mu_1=\hat\mu_1=\bar y_1-\hat\lambda$, as was to be shown.

The proof for the case that all $y_{i1}=\min(y_{i1},...,y_{im})$ follows similarly.

\end{document}